\def\qed{\leavevmode\unskip\penalty9999 \hbox{}\nobreak\hfill
     \quad\hbox{\leavevmode  \hbox to.77778em{%
               \hfil\vrule   \vbox to.675em%
               {\hrule width.6em\vfil\hrule}\vrule\hfil}}
     \par\vskip3pt}
\newtheorem{theorem}{Theorem}
\newtheorem{lemma}{Lemma}
\newtheorem{definition}{Definition}
\begin{document}

\begin{center}
\bf{Coherence Concurrence for X States}
\end{center}

\begin{center}
{Ming-Jing Zhao$^{1}$, Teng Ma$^{2,3,4}$, Zhen Wang$^5$, Shao-Ming Fei$^{6,7}$, Rajesh Pereira$^8$}
\end{center}

\begin{center}
\begin{minipage}{5in}
\small $~^{1}$ {School of Science, Beijing
Information Science and Technology University, 100192, Beijing,
China}

\small $~^{2}$ {Shenzhen Institute for Quantum Science and Engineering and Department of Physics, Southern University of Science and Technology, Shenzhen 518055, China}

\small $~^{3}$ {Key Laboratory of Quantum Information,
University of Science and Technology of China, CAS, Hefei 230026, China}

\small $~^{4}$ {Shenzhen Key Laboratory of Quantum Science and Engineering, Southern University of Science and Technology, Shenzhen 518055, China}

\small $~^{5}$ {Department of Mathematics, Jining University,
Qufu 273155, China}

\small $~^{6}$ {School of Mathematical Sciences, Capital Normal
University, Beijing
100048, China}

\small $~^{7}$ {Max-Planck-Institute for Mathematics in the Sciences, 04103
Leipzig, Germany}

\small $~^{8}$ {Department of Mathematics and Statistics, University of Guelph, N1G2W1, Canada}

\end{minipage}
\end{center}


\textbf{Abstract}\ \
We study the properties of coherence concurrence and present a physical explanation analogous to the coherence of assistance. We give an optimal pure state decomposition which attains the coherence concurrence
for qubit states. We prove the additivity of coherence concurrence under direct sum operations in another way. Using these results, we calculate analytically the coherence concurrence for X states and show its optimal decompositions. Moreover, we show that the coherence concurrence is exactly twice the convex roof extended negativity of the Schmidt correlated states, thus establishing a direct relation between coherence concurrence and quantum entanglement.

\textbf{Keywords}\ \ Quantum coherence$\cdot$Coherence concurrence$\cdot$X states

\section{Introduction}

Quantum coherence is an important feature in quantum physics and is of practical significance in quantum computation and quantum communication \cite{A. Streltsov-rev,E. Chitambar-review,M. Hu}.
The formulation of the resource theory of coherence was initiated in Ref. \cite{T. Baumgratz}, in which some intuitive and
computable measures of coherence are identified, for example, the $l_1$-norm coherence and the relative entropy coherence. These coherence measures quantify coherence by using the minimal distance between the quantum state and the set of incoherent states.
Operationally, distillable coherence and coherence cost are two quantum measures quantifying the optimal rate in transformation between quantum states and maximally coherent states under incoherent operations in the asymptotic limit \cite{A. Winter}.
Robustness of coherence is a coherence monotone which quantifies the minimal mixing required to make a state incoherent \cite{C. Napoli}, from which the witness observable has been demonstrated \cite{W. Zheng}. The skew information based coherence has been proposed as a characterization of the uncertainty of the system being measured \cite{D. Girolami}.

If a coherence measure is defined for all pure states, it can be extended to all mixed states using the convex roof construction. For instance, the
intrinsic randomness of coherence and coherence of information are convex roof extended coherence measures based on the relative entropy coherence \cite{A. Winter,X. Yuan}, while the coherence concurrence is based on $l_1$-norm coherence \cite{X. Qi}. The coherence number is also a convex roof extended discrete coherence monotone based on the Schmidt numbers \cite{S. Chin}, as is the fidelity-based measure of coherence \cite{C. Liu}.

Although the convex roof extended coherence quantifiers are valid coherence measures provided they are valid for pure states, they are not easy to calculate in general since these calculation involves minimizations. For single qubit states, many convex roof extended coherence measures including intrinsic randomness, coherence concurrence, fidelity-based measure of coherence have analytical expressions \cite{X. Yuan,X. Qi,C. Liu}. But the situation becomes much more complicated for three or higher dimensional systems.

In this paper, we focus on the coherence concurrence which is the convex roof extension of the $l_1$-norm coherence on pure states. We first analyze the optimal pure state decomposition for single qubit states. Then we show the additivity of coherence concurrence under the direct sum operations in another way. Based on the additivity we calculate analytically the coherence concurrence for X states and show its optimal pure state decomposition. Finally, we present the relation between the coherence concurrence and entanglement.

\section{The Coherence Concurrence for X states}

Under a fixed reference basis $\{|i\rangle\}$,
a quantum state $\rho$ is said to be incoherent if the state is diagonal in this basis, i.e. $\rho=\sum \rho_{i}|i\rangle \langle i|$.
Otherwise the quantum state is said to be coherent. One commonly used coherence measure is called the $l_1$-norm coherence.
\begin{definition}
The $l_1$-norm coherence of a quantum state $\rho=\sum \rho_{ij}|i\rangle \langle j|$ is the sum of the magnitudes of all off-diagonal entries, $C_{l_1}(\rho)=\sum_{i\neq j} |\rho_{ij}|$.
\end{definition}
Based on the $l_1$-norm coherence, the coherence concurrence of $\rho$ is proposed in the convex roof
construction \cite{X. Qi}.
\begin{definition}
The coherence concurrence $C_{l_1}^{c}$ of $\rho$ is
\begin{equation}
C_{l_1}^{c}(\rho)=\min \sum_i p_i C_{l_1}(|\psi_i\rangle\langle \psi_i|),
\end{equation}
where the minimization is taken over all pure state decompositions of $\rho=\sum_i p_i |\psi_i\rangle\langle\psi_i|$.
\end{definition}

Dual to the definition of coherence concurrence,
the $l_1$-norm coherence of assistance is the maximal average $l_1$-norm coherence
$C_{l_1}^{a}(\rho)=\max \sum_k p_k C_{l_1}(|\psi_k\rangle\langle \psi_k|)$,
where the maximization is taken over all pure state decompositions of $\rho=\sum_k p_k |\psi_k\rangle\langle\psi_k|$ \cite{M. J. Zhao-l1-ca}.
Employing the physical illustration of the coherence of assistance, the coherence concurrence $C_{l_1}^{c}$ has the following operational interpretation.
Suppose Alice holds a state $\rho^A$. Bob holds another part of the purified state of $\rho^A$. Bob performs local measurements and informs Alice of the measurement outcomes by classical communication. Alice's quantum state will be in one pure state ensemble $\{ p_k,\  |\psi_k\rangle\langle \psi_k|\}$ with average $l_1$-norm coherence $\sum_k p_k C_{l_1}(|\psi_k\rangle\langle \psi_k|)$. As $l_1$-norm coherence is a convex function, the $l_1$-norm coherence can be increased minimally to $C_{l_1}^{c}(\rho^A)$ by such process.

For a two dimensional quantum state $\rho=
\left(\begin{array}{cccccccc}
\rho_{11} &\rho_{12}\\
\rho_{12}^*&\rho_{22}
\end{array}\right)$, the coherence concurrence has been shown to be the $l_1$-norm coherence $C_{l_1}^{c}(\rho)=C_{l_1}(\rho)=2|\rho_{12}|$ \cite{X. Yuan,X. Qi}. We first present an optimal pure state decomposition attaining the minimum average $l_1$-norm coherence for the qubit state $\rho$.
We assume $0<\rho_{11}\leq \rho_{22}<1$. Let $|\psi_1\rangle\langle\psi_1|=\frac{1}{p_1}\left(\begin{array}{cccccccc}
\rho_{11} &\rho_{12}\\
\rho_{12}^*& |\rho_{12}|^2/\rho_{11}
\end{array}\right)$ and $|\psi_2\rangle\langle\psi_2|=|1\rangle \langle1|$ with $p_1=\rho_{11}+|\rho_{12}|^2/\rho_{11}$ and $p_2=\rho_{22}-|\rho_{12}|^2/\rho_{11}$. Then $\{p_i,\ |\psi_i\rangle\langle \psi_i|\}_{i=1}^2$ is a pure state decomposition with
the average $l_1$-norm coherence being the same as the $l_1$-norm coherence $2|\rho_{12}|$.

However, the coherence concurrence and $l_1$-norm coherence are not necessarily equal in higher dimensional systems.
For example, consider the three dimensional quantum state
$\rho_x=\frac{1}{3}
\left(\begin{array}{cccccccc}
1 & 0 & 1\\
0 & 1 & x\\
1 & x & 1
\end{array}\right)
$
with 0$<|x|\leq 1$.
For any pure state decomposition $\{p_k, |\psi_k\rangle\langle \psi_k|\}$ of $\rho_x$, one can check directly that there are at least two pure states $|\psi_{k_1}\rangle$ and $|\psi_{k_2}\rangle$ in the ensemble with three nonzero coefficients, $|\psi_{k_1}\rangle=\sum_{i=1}^3 a^{(k_1)}_i |i\rangle$ and $|\psi_{k_2}\rangle=\sum_{i=1}^3 a^{(k_2)}_i |i\rangle$ with  $a^{(k_1)}_i,\ a^{(k_2)}_i\neq 0$ for $i=1,2,3$. By the convexity of the $l_1$-norm coherence,
it is easy to show that
the average $l_1$-norm coherence is strictly larger than the $l_1$-norm coherence, namely, $C_{l_1}^{c}(\rho_x)>C_{l_1}(\rho_x)$.

Before calculating the coherence concurrence for X states, we show the additivity for coherence concurrence under direct sum operation first.
The strong monotonicity and convexity
of a coherence measure are in fact equivalent to the additivity of coherence for
subspace independent states \cite{X. D. Yu}. The coherence concurrence as a valid coherence measure
should satisfy the additivity under the direct sum operation. Here we give another proof of the additivity of coherence concurrence under the direct sum operation and explore its optimal pure state decompositions.

\begin{lemma} \cite{E. Sch} \label{lemma different decomposition}
Suppose $\rho=\sum_{l=1}^n \lambda_l |\psi_l\rangle \langle \psi_l|$ and $\rho=\sum_{k=1}^m p_k |\phi_k\rangle \langle \phi_k|$ are two arbitrary pure state decompositions of given quantum state $\rho$ with $\sum_{l=1}^n \lambda_l=\sum_{k=1}^m p_k=1$, $0\leq \lambda_l\leq 1$, $0\leq p_k\leq 1$ for $l=1,\cdots,n$, $k=1,\cdots,m$.
Then these two pure state decompositions are related by a transformation:
\begin{equation}\label{eq relation between two decomposition}
\sqrt{p_k}|\phi_k\rangle=\sum_{l=1}^n U_{lk} \sqrt{\lambda_l}|\psi_l\rangle,\ \ \ k=1,\cdots,m,
\end{equation}
where $U=(U_{lk})$ satisfying $UU^\dagger=I_{n\times n}$.
\end{lemma}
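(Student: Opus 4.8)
The plan is to treat this as the Schr\"odinger--Hughston--Jozsa--Wootters ensemble theorem and to prove it by routing both decompositions through the spectral decomposition of $\rho$ as a canonical reference. First I would fix the spectral decomposition $\rho=\sum_{j=1}^r r_j|e_j\rangle\langle e_j|$, where $r=\mathrm{rank}(\rho)$, the $r_j>0$ are the nonzero eigenvalues, and $\{|e_j\rangle\}$ is orthonormal. Writing the unnormalized vectors $|\tilde\psi_l\rangle=\sqrt{\lambda_l}|\psi_l\rangle$ and $|\tilde\phi_k\rangle=\sqrt{p_k}|\phi_k\rangle$, the two hypotheses become the operator identities $\rho=\sum_l|\tilde\psi_l\rangle\langle\tilde\psi_l|=\sum_k|\tilde\phi_k\rangle\langle\tilde\phi_k|$, so that the whole statement is reduced to understanding the freedom in factorizing $\rho$ as a sum of rank-one terms.

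The key preliminary step is a support argument: for any unit vector $|v\rangle$ orthogonal to $\mathrm{supp}(\rho)=\mathrm{span}\{|e_j\rangle\}$ one has $0=\langle v|\rho|v\rangle=\sum_l|\langle v|\tilde\psi_l\rangle|^2$, which forces each $|\tilde\psi_l\rangle$, and likewise each $|\tilde\phi_k\rangle$, to lie in $\mathrm{supp}(\rho)$. Hence I may expand $|\tilde\psi_l\rangle=\sum_{j=1}^r M_{jl}|e_j\rangle$ and collect the coefficients into an $r\times n$ matrix $M$. Equating matrix elements of $\rho$ in the eigenbasis gives $(MM^\dagger)_{ij}=\langle e_i|\rho|e_j\rangle=r_i\delta_{ij}$, i.e. $MM^\dagger=D:=\mathrm{diag}(r_1,\dots,r_r)$. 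Rescaling, the matrix $V:=D^{-1/2}M$ satisfies $VV^\dagger=I_r$ and $|\tilde\psi_l\rangle=\sum_j V_{jl}\sqrt{r_j}\,|e_j\rangle$; in other words each decomposition is linked to the reference ensemble $\{\sqrt{r_j}\,|e_j\rangle\}$ by a matrix with orthonormal rows. Running the same argument on the second decomposition produces an $r\times m$ matrix $W$ with $WW^\dagger=I_r$.

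The final step is to eliminate the reference ensemble. Since the $\sqrt{r_j}\,|e_j\rangle$ are linearly independent, the relations $|\tilde\psi_l\rangle=\sum_j V_{jl}\sqrt{r_j}\,|e_j\rangle$ and $|\tilde\phi_k\rangle=\sum_j W_{jk}\sqrt{r_j}\,|e_j\rangle$ reduce the desired identity $|\tilde\phi_k\rangle=\sum_l U_{lk}|\tilde\psi_l\rangle$ to the purely algebraic equation $W=VU$ for an $n\times m$ matrix $U$. I would solve this by extending $V$, which has orthonormal rows and therefore forces $n\ge r$, to a genuine $n\times n$ unitary $\tilde V$ by appending $n-r$ further orthonormal rows, and then setting $U=\tilde V^\dagger N$, where $N$ is the $n\times m$ matrix whose top $r$ rows are $W$ and whose bottom $n-r$ rows form a block $X$ to be chosen. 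Because the first $r$ rows of $\tilde V$ are exactly $V$, the block structure gives $VU=W$ regardless of $X$, while $UU^\dagger=\tilde V^\dagger NN^\dagger\tilde V$ collapses to $I_n$ precisely when $WW^\dagger=I_r$ together with $WX^\dagger=0$ and $XX^\dagger=I_{n-r}$, i.e. when $X$ can be taken with orthonormal rows orthogonal to the rows of $W$.

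I expect this last normalization to be the main obstacle and the point demanding the most care: such an $X$ exists only when there is enough room in $\mathbb{C}^m$, namely when $m\ge n$, which is the natural reading of the condition $UU^\dagger=I_{n\times n}$ (in the general case one first pads the shorter ensemble with zero vectors, equivalently realizing $U$ as a block of a square unitary). Once the orthonormality of $X$ is arranged, undoing the rescalings $|\tilde\phi_k\rangle=\sqrt{p_k}|\phi_k\rangle$ and $|\tilde\psi_l\rangle=\sqrt{\lambda_l}|\psi_l\rangle$ recovers exactly relation (\ref{eq relation between two decomposition}).
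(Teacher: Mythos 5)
Your proof is correct, but there is nothing in the paper to compare it against: the paper states Lemma \ref{lemma different decomposition} as a known result, citing Schr\"odinger \cite{E. Sch}, and gives no proof of it (its only gloss is the remark that the matrix $U$ need not be square). What you have written is a complete and correct proof of the standard Schr\"odinger--HJW ensemble theorem along the canonical route: pass to unnormalized vectors, show by the support argument that every $|\tilde\psi_l\rangle$ and $|\tilde\phi_k\rangle$ lies in $\mathrm{supp}(\rho)$, express each ensemble through the spectral ensemble $\{\sqrt{r_j}\,|e_j\rangle\}$ via matrices $V$ (of size $r\times n$) and $W$ (of size $r\times m$) with orthonormal rows, and then solve $W=VU$ subject to $UU^\dagger=I_n$ by completing $V$ to an $n\times n$ unitary $\tilde V$ and stacking $W$ over a block $X$ of orthonormal rows orthogonal to the rows of $W$. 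The linear algebra checks out: $VU=W$ holds independently of $X$, and $UU^\dagger=\tilde V^\dagger NN^\dagger\tilde V=I_n$ precisely when $WW^\dagger=I_r$, $WX^\dagger=0$, and $XX^\dagger=I_{n-r}$. You also caught a point that the paper's statement silently glosses over: an $n\times m$ matrix $U$ with $UU^\dagger=I_{n\times n}$ can exist only when $m\ge n$, so the lemma as literally stated presupposes $m\ge n$ or a padding of the shorter ensemble by zero vectors; your existence argument for $X$ makes this precise, since the orthogonal complement of the row space of $W$ in $\mathbb{C}^m$ has dimension $m-r$, which accommodates $n-r$ further orthonormal rows exactly when $m\ge n$. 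This caveat is also relevant to how the lemma is invoked in the proof of Theorem \ref{th direct sum}, where the fixed optimal decomposition indexes the rows of $U$ and an arbitrary decomposition indexes the columns: if the arbitrary decomposition happens to be the shorter one, it must first be padded with zero-probability states, which is harmless for the inequality derived there.
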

Here the transformation matrix $U$ is not necessarily square.
It should be also noted that the normalizer of a quantum state is not essential for the $l_1$-norm coherence as the $l_1$-norm is potentially homogenous. Hence we refer to the $l_1$-norm coherence of unnormalized density matrix sometimes for simplicity.

\begin{definition}
The direct sum of quantum states $\rho_i$ with probability $\sigma_i$, where $\sum_{i=1}^K \sigma_i=1$, $\sigma_i>0$ for $i=1,2,\cdots, K$, is the quantum state $\rho$ with density matrix in block diagonal form such as
$$\rho=\sigma_1\rho_1\oplus \sigma_2\rho_2\oplus \cdots\oplus \sigma_K\rho_K=\left(
\begin{array}{cccc}
\sigma_1\rho_1 & 0 & \cdots & 0 \\
0 &\sigma_2\rho_2 & \cdots & 0\\
\cdots & \cdots & \cdots & \cdots \\
0 & 0 & \cdots & \sigma_K\rho_K
\end{array}
\right).$$
\end{definition}

\begin{theorem}\label{th direct sum}
If a quantum state $\rho$ is the direct sum of some states $\rho_i$ with probability $\sigma_i$,
$\rho=\sigma_1\rho_1\oplus \sigma_2\rho_2\oplus \cdots\oplus \sigma_K\rho_K$, where $\sum_{i=1}^K \sigma_i=1$, $\sigma_i>0$ for $i=1,2,\cdots, K$,
then the coherence concurrence of $\rho$ is
\begin{eqnarray}\label{theorem1}
C_{l_1}^{c}(\rho)=\sigma_1C_{l_1}^{c}(\rho_1)+\sigma_2C_{l_1}^{c}(\rho_2)+\cdots+\sigma_KC_{l_1}^{c}(\rho_K).
\end{eqnarray}
The optimal pure state decomposition for $\rho$ attaining $C_{l_1}^{c}(\rho)$ is the union of the optimal pure state decompositions of $\sigma_i \rho_i$ attaining $C_{l_1}^{c}(\rho_i)$, $i=1,2,\cdots, K$.
\end{theorem}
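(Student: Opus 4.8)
The plan is to prove the two inequalities $C_{l_1}^{c}(\rho)\le \sum_i \sigma_i C_{l_1}^{c}(\rho_i)$ and $C_{l_1}^{c}(\rho)\ge \sum_i \sigma_i C_{l_1}^{c}(\rho_i)$ separately; equality then delivers both the stated value and the claimed optimal decomposition. For the upper bound I would take, for each block, an optimal pure state decomposition $\rho_i=\sum_j q_{ij}|\psi_{ij}\rangle\langle\psi_{ij}|$ with $\sum_j q_{ij}C_{l_1}(|\psi_{ij}\rangle\langle\psi_{ij}|)=C_{l_1}^{c}(\rho_i)$, and embed each $|\psi_{ij}\rangle$ into the $i$-th diagonal block of the full space. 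Their union $\{\sigma_i q_{ij},\,|\psi_{ij}\rangle\langle\psi_{ij}|\}$ is a pure state decomposition of $\rho$. The key point is that a vector supported entirely inside one block has no cross-block matrix entries, so its $l_1$-norm coherence computed in the full reference basis coincides with the one computed inside the block. Summing yields $\sum_i\sigma_i C_{l_1}^{c}(\rho_i)$ as an achievable average, hence the upper bound, and simultaneously shows the union is optimal once the matching lower bound is established.

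The harder direction is the lower bound, and it is the step I expect to be the main obstacle. I would start from an arbitrary decomposition $\rho=\sum_k p_k|\phi_k\rangle\langle\phi_k|$ and write $|\phi_k^{(i)}\rangle=P_i|\phi_k\rangle$ for the (unnormalized) component of $|\phi_k\rangle$ in the $i$-th block, where $P_i$ is the projector onto that block. Splitting the off-diagonal entries of $|\phi_k\rangle\langle\phi_k|$ into within-block and cross-block pairs, and discarding the nonnegative cross-block contribution, gives the pointwise estimate $C_{l_1}(|\phi_k\rangle\langle\phi_k|)\ge\sum_i C_{l_1}(|\phi_k^{(i)}\rangle\langle\phi_k^{(i)}|)$. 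Next, because $\rho$ is block diagonal, projecting the entire decomposition onto block $i$ yields $\sigma_i\rho_i=P_i\rho P_i=\sum_k p_k|\phi_k^{(i)}\rangle\langle\phi_k^{(i)}|$, so the normalized vectors $|\hat\phi_k^{(i)}\rangle$ together with weights $r_{ki}=p_k\||\phi_k^{(i)}\rangle\|^2$ form a sub-normalized pure state ensemble for $\sigma_i\rho_i$.

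Finally I would exploit the homogeneity of the $l_1$-norm coherence, $C_{l_1}(\lambda M)=\lambda C_{l_1}(M)$ for $\lambda\ge 0$, already flagged in the excerpt, to rewrite $p_k C_{l_1}(|\phi_k^{(i)}\rangle\langle\phi_k^{(i)}|)=r_{ki}C_{l_1}(|\hat\phi_k^{(i)}\rangle\langle\hat\phi_k^{(i)}|)$. Since $\{r_{ki}/\sigma_i,\,|\hat\phi_k^{(i)}\rangle\}_k$ is a genuine normalized decomposition of $\rho_i$, the definition of coherence concurrence gives $\sum_k r_{ki}C_{l_1}(|\hat\phi_k^{(i)}\rangle\langle\hat\phi_k^{(i)}|)\ge\sigma_i C_{l_1}^{c}(\rho_i)$. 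Chaining the two estimates and summing over $i$ bounds the average coherence of the arbitrary decomposition below by $\sum_i\sigma_i C_{l_1}^{c}(\rho_i)$; minimizing over all decompositions of $\rho$ completes the lower bound.

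Lemma \ref{lemma different decomposition} is available if one prefers to parametrize all decompositions explicitly through a unitary relation, but the projection-plus-homogeneity route above avoids it and seems cleaner. The delicate points to get right are the bookkeeping of the unnormalized block components (including harmlessly dropping the terms with $|\phi_k^{(i)}\rangle=0$) and confirming that the weights $r_{ki}/\sigma_i$ indeed sum to one, so that the per-block definition of $C_{l_1}^{c}$ genuinely applies and the final comparison is legitimate.
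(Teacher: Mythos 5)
Your proof is correct, and the lower-bound argument takes a genuinely different route from the paper's. The upper bound (embedding the blockwise optimal ensembles and taking their union) is identical in both. For the lower bound, the paper fixes the union of optimal block decompositions $\{p_s,|\psi_s\rangle\}$, invokes Schr\"odinger's mixture theorem (Lemma \ref{lemma different decomposition}) to write an arbitrary decomposition as $\sqrt{q_t}|\phi_t\rangle=\sum_s U_{st}\sqrt{p_s}|\psi_s\rangle$ with $UU^\dagger=I$, partitions both $q_t|\phi_t\rangle\langle\phi_t|$ and $U$ into blocks, discards the cross-block entries, and uses $U^{(1)}U^{(1)\dagger}=I$, $U^{(2)}U^{(2)\dagger}=I$ to see that the diagonal-block pieces reassemble into decompositions of $\sigma_1\rho_1$ and $\sigma_2\rho_2$, which are then compared against the fixed optimal ones. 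You instead project directly, $|\phi_k^{(i)}\rangle=P_i|\phi_k\rangle$, use $P_i\rho P_i=\sigma_i\rho_i$ to conclude the projected vectors form a valid (sub-normalized) ensemble for each block, and compare against $C_{l_1}^{c}(\rho_i)$ by its very definition as an infimum. The two arguments share the same core inequality --- your discarding of cross-block entries is exactly the paper's discarding of the blocks $A_2^{(t)},A_3^{(t)}$, and indeed the paper's $\sqrt{q_t^\prime}|\phi_t^\prime\rangle=\sum_{s\leq W}U_{st}\sqrt{p_s}|\psi_s\rangle$ is nothing but $P_1$ applied to $\sqrt{q_t}|\phi_t\rangle$ --- but your bookkeeping is lighter: you need no mixture theorem, no fixed reference decomposition for the lower bound (so no appeal to attainment of the minimum there), and your argument covers general $K$ in one pass, whereas the paper proves $K=2$ and asserts the general case ``in an analogous manner.'' What the paper's route buys in exchange is an explicit parametrization of all decompositions through $U$, which makes the relationship between the optimal ensemble and any competitor completely concrete.
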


Note that when we say $\{p_s,\  |\psi_s\rangle \langle \psi_s|\}$ is a pure state decomposition of an unnormalized quantum state $\rho$, we mean that $\{|\psi_s\rangle\}$ are normalized states and $\sum_s p_s=Tr(\rho)$.

\begin{proof}
Here we only need to prove the case that $C_{l_1}^{c}(\rho)=\sigma_1C_{l_1}^{c}(\rho_1)+\sigma_2C_{l_1}^{c}(\rho_2)$ for  $\rho=\sigma_1\rho_1\oplus \sigma_2\rho_2$.
Suppose $\{p_s,\  |\psi_s\rangle \langle \psi_s|\}_{s=1,\cdots,W}$
is a pure state decomposition attaining the minimum average $l_1$-norm coherence of $\sigma_1\rho_1$,
\begin{equation}
\sigma_1C_{l_1}^{c}(\rho_1)=
C_{l_1}^{c}(\sigma_1\rho_1)=\sum_{s=1}^W p_s C_{l_1}(|\psi_s\rangle\langle \psi_s|),
\end{equation}
with normalized pure state $|\psi_s\rangle =\sum_{i=1}^r a_i^{(s)}|i\rangle$, $s=1,\cdots, W$, and $\sum_{s=1}^W p_s=\sigma_1$; and $\{p_s,\  |\psi_s\rangle \langle \psi_s| \}_{s=W+1,\cdots,X}$ is a pure state decomposition attaining the minimum average $l_1$-norm coherence of $\sigma_2\rho_2$,
\begin{equation}
\sigma_2C_{l_1}^{c}(\rho_2)=C_{l_1}^{c}(\sigma_2\rho_2)=\sum_{s=W+1}^X p_s C_{l_1}(|\psi_s\rangle\langle \psi_s|),
\end{equation}
with normalized pure state $|\psi_s\rangle =\sum_{i=r+1}^n a_i^{(s)}|i\rangle$, $s=W+1,\cdots, X$, $X>W$, and $\sum_{s=1}^W p_s=\sigma_2$. Then  $\{p_s,\  |\psi_s\rangle\langle \psi_s| \}_{s=1,\cdots,W}\bigcup \{p_s,\  |\psi_s\rangle\langle \psi_s|\}_{s=W+1,\cdots,X}=\{p_s,\  |\psi_s\rangle\langle \psi_s|\}_{s=1,\cdots,X}$ is a pure state decomposition for $\rho$.
By definition, the minimum average $l_1$-norm coherence of $\rho$ satisfies
\begin{equation}\label{th eq cl1-c <}
\begin{array}{rcl}
C_{l_1}^{c}(\rho)&\leq& \sum_{s=1}^W p_s C_{l_1}(|\psi_s\rangle\langle \psi_s|) + \sum_{s=W+1}^X p_s C_{l_1}(|\psi_s\rangle\langle \psi_s|)\\
&=& \sigma_1C_{l_1}^{c}(\rho_1)+\sigma_2C_{l_1}^{c}(\rho_2).
\end{array}
\end{equation}

From Eq. (\ref{eq relation between two decomposition}), any other pure state decomposition $\{q_t,\  |\phi_t\rangle \langle\phi_t|\}_{t=1,\cdots,Y}$ of $\rho$ can be written as $\sqrt{q_t}|\phi_t\rangle=\sum_{s=1}^X U_{st} \sqrt{p_s}|\psi_s\rangle$, $t=1,\cdots,Y$.
We can partition each matrix $q_t|\phi_t\rangle\langle\phi_t|$ into four blocks, one diagonal block with the first $W$ rows and the first $W$ columns, $A_1^{(t)}=\sum_{s=1}^W \sum_{s^\prime=1}^W U_{st}U_{s^\prime t}^\dagger \sqrt{p_s}  \sqrt{p_{s^\prime}} |\psi_s\rangle \langle \psi_{s^\prime}|$; one off diagonal block with the first $W$ rows and the last $X-W$ columns, $A_2^{(t)}=\sum_{s=1}^W \sum_{s^\prime=W+1}^X U_{st}U_{s^\prime t}^\dagger \sqrt{p_s}  \sqrt{p_{s^\prime}} |\psi_s\rangle \langle \psi_{s^\prime}|$; one off diagonal block with the last $X-W$ rows and the first $W$ columns, $A_3^{(t)}=\sum_{s=W+1}^X \sum_{s^\prime=1}^W U_{st}U_{s^\prime t}^\dagger \sqrt{p_s}  \sqrt{p_{s^\prime}} |\psi_s\rangle \langle \psi_{s^\prime}|$; and the last diagonal block with the last $X-W$ rows and the last $X-W$ columns, $A_4^{(t)}=\sum_{s=W+1}^X \sum_{s^\prime=W+1}^X U_{st}U_{s^\prime t}^\dagger \sqrt{p_s}  \sqrt{p_{s^\prime}} |\psi_s\rangle \langle \psi_{s^\prime}|$, $t=1,2,\cdots, Y$.
That is
\begin{equation*}
q_t|\phi_t\rangle\langle\phi_t|=
\left(
\begin{array}{cc}
A_1^{(t)} & A_2^{(t)}\\
A_3^{(t)} & A_4^{(t)}
\end{array}
\right),\ \ t=1,2,\cdots, Y.
\end{equation*}
The $l_1$-norm coherence of $\rho$ comes from all off diagonal entries of the diagonal blocks $A_1^{(t)}$ and $A_4^{(t)}$ and all entries of the off diagonal blocks $A_2^{(t)}$ and $A_3^{(t)}$, $t=1,2,\cdots, Y$. Therefore,
\begin{equation}
\begin{array}{rcl}
\sum_{t=1}^Y q_t C_{l_1}(|\phi_t\rangle \langle\phi_t|)&=&\sum_{t=1}^Y C_{l_1}(\sum_{s=1}^X \sum_{s^\prime=1}^X U_{st} \sqrt{p_s} U_{s^\prime t}^\dagger \sqrt{p_{s^\prime}} |\psi_s\rangle \langle \psi_{s^\prime}|)\\
&\geq&\sum_{t=1}^Y C_{l_1}(\sum_{s=1}^W \sum_{s^\prime=1}^W U_{st} \sqrt{p_s} U_{s^\prime t}^\dagger \sqrt{p_{s^\prime}} |\psi_s\rangle \langle \psi_{s^\prime}|)\\&&+\sum_{t=1}^Y C_{l_1}(\sum_{s=W+1}^X \sum_{s^\prime=W+1}^X U_{st} \sqrt{p_s} U_{s^\prime t}^\dagger \sqrt{p_{s^\prime}} |\psi_s\rangle \langle \psi_{s^\prime}|),
\end{array}
\end{equation}
where we have gotten rid of the magnitudes of all entries of the off diagonal blocks $A_2^{(t)}$ and $A_3^{(t)}$ in the above inequality.

Similarly, we partition the matrix $U=(U_{st})$ into two blocks, one block with the first $W$ rows and the other block with the last $X-W$ rows as
$U=\left(
\begin{array}{cc}
U^{(1)}\\U^{(2)}
\end{array}
\right)$ with $U^{(1)}U^{(1)\dagger}=I_{W\times W}$ and $U^{(2)}U^{(2)\dagger}=I_{(X-W)\times (X-W)}$. We can obtain a pure state decomposition $\{q_t^\prime,\  |\phi_t^\prime\rangle \langle\phi_t^\prime|\}_{t=1,\cdots,Y}$ for $\rho_1$ with $\sqrt{q_t^\prime}|\phi_t^\prime\rangle=\sum_{s=1}^W U_{st} \sqrt{p_s}|\psi_s\rangle$, and a pure state decomposition $\{q_t^{\prime\prime},\  |\phi_t^{\prime\prime}\rangle\langle \phi_t^{\prime\prime}|\}_{t=1,\cdots,Y}$ for $\rho_2$ with $\sqrt{q_t^{\prime\prime}}|\phi_t^{\prime\prime}\rangle=\sum_{s=W+1}^X U_{st} \sqrt{p_s}|\psi_s\rangle$. Since $\{p_s,\  |\psi_s\rangle \langle\psi_s| \}_{s=1,\cdots,W}$ and $\{p_s,\  |\psi_s\rangle \langle\psi_s|\}_{s=W+1,\cdots,X}$ are optimal pure state decompositions attaining the minimum of average $l_1$-norm coherence of $\rho_1$ and $\rho_2$, respectively, we have
\begin{equation}
\begin{array}{rcl}\label{th eq cl1-c >}
\sum_{t=1}^Y q_t C_{l_1}(|\phi_t\rangle \langle\phi_t| )
&\geq&\sum_{t=1}^Y C_{l_1}(\sum_{s=1}^W \sum_{s^\prime=1}^W U_{st}U_{s^\prime t}^\dagger \sqrt{p_s}  \sqrt{p_{s^\prime}} |\psi_s\rangle \langle \psi_{s^\prime}|)\\&&+\sum_{t=1}^Y C_{l_1}(\sum_{s=W+1}^X \sum_{s^\prime=W+1}^X U_{st}U_{s^\prime t}^\dagger \sqrt{p_s}  \sqrt{p_{s^\prime}} |\psi_s\rangle \langle \psi_{s^\prime}|)\\
&=&\sum_{t=1}^Y q_t^\prime C_{l_1}(|\phi_t^\prime\rangle \langle \phi_t^\prime|)+\sum_{t=1}^Y q_t^{\prime\prime} C_{l_1}(|\phi_t^{\prime\prime}\rangle \langle \phi_t^{\prime\prime}|)\\
&\geq& \sum_{s=1}^W p_s C_{l_1}(|\psi_s\rangle \langle\psi_s|) + \sum_{s=W+1}^X p_s C_{l_1}(|\psi_s\rangle\langle\psi_s|)\\
&=& \sigma_1C_{l_1}^{c}(\rho_1)+\sigma_2C_{l_1}^{c}(\rho_2).
\end{array}
\end{equation}
Combining Eqs. (\ref{th eq cl1-c <}) and (\ref{th eq cl1-c >}), we obtain the relation $C_{l_1}^{c}(\rho)=\sigma_1C_{l_1}^{c}(\rho_1)+\sigma_2C_{l_1}^{c}(\rho_2)$. Furthermore, the union of the optimal pure state decompositions of $\sigma_1\rho_1$ and $\sigma_2\rho_2$ is the optimal pure state decomposition of $\rho$.
The general result in Eq. (\ref{theorem1}) can be shown in an analogous manner.
\end{proof}

Now we are ready to calculate the coherence concurrence for X states.
\begin{definition}
The $n$ dimensional X states are quantum states with density matrices in X
shape,
\begin{eqnarray}\label{form of n-dim X state}
\rho=
\left(\begin{array}{cccccccc}
\rho_{11} &0 &0  &\cdots &0  &0  &  \rho_{1,n}\\
 0&\rho_{22} &0  &\cdots &0  &\rho_{{2},{n-1}}  &  0\\
  0& 0 &\rho_{33} &\cdots &\rho_{{3},{n-2}}  & 0 &  0\\
 \cdots& \cdots &\cdots &\cdots &\cdots  & \cdots & \cdots\\
 0& 0 &\rho_{{n-2},3} &\cdots &\rho_{{n-2},{n-2}}  & 0 &  0\\
 0&\rho_{{n-1},2} &0  &\cdots &0  &\rho_{{n-1},{n-1}}  &  0\\
 \rho_{n,1} &0 &0  &\cdots &0  &0  &  \rho_{nn}
\end{array}\right).
\end{eqnarray}
\end{definition}
Without loss of generality, we suppose $\rho_{ii}\leq  \rho_{n-i,n-i}$ for X state, $1\leq i \leq [n/2]$.

\begin{theorem}\label{th X coherence concurrence}
The coherence concurrence of the $n$ dimensional X state $\rho$ given in (\ref{form of n-dim X state}),  is
$$C_{l_1}^{c}(\rho)=2\sum_{i=1}^{[n/2]}|\rho_{i,n+1-i}|.$$
If $n$ is even, an optimal decomposition of $\rho$ is $\{p_i, |\psi_i\rangle\langle\psi_i|\}_{i=1}^n$ with $|\psi_i\rangle\langle\psi_i|=\frac{1}{p_i}\left(\begin{array}{cccccccc}
\rho_{i,i} &\rho_{i,n+1-i}\\
\rho_{i,n+1-i}^*& |\rho_{{i,n+1-i}}|^2/\rho_{i,i}
\end{array}\right)$ with $p_i=\rho_{i,i}+|\rho_{i,n+1-i}|^2/\rho_{i,i}$ for $1\leq i \leq [n/2]$; and $|\psi_i\rangle\langle\psi_i|=|i\rangle \langle i|$ with $p_i=\rho_{n+1-i,n+1-i}-|\rho_{i,n+1-i}|^2/\rho_{i,i}$ for $[n/2]+1\leq i \leq n$.
If $n$ is odd, an optimal decomposition of $\rho$ is $\{p_i, |\psi_i\rangle\langle\psi_i|\}_{i=1}^n$ with $|\psi_i\rangle\langle\psi_i|=\frac{1}{p_i}\left(\begin{array}{cccccccc}
\rho_{i,i} &\rho_{i,n+1-i}\\
\rho_{i,n+1-i}^*& |\rho_{{i,n+1-i}}|^2/\rho_{i,i}
\end{array}\right)$ with $p_i=\rho_{i,i}+|\rho_{i,n+1-i}|^2/\rho_{i,i}$ for $1\leq i \leq [n/2]$; $|\psi_i\rangle\langle\psi_i|=|i\rangle \langle i|$ with $p_i=\rho_{n+1-i,n+1-i}-|\rho_{i,n+1-i}|^2/\rho_{i,i}$ for $[n/2]+1< i \leq n$; $|\psi_{[n/2]+1}\rangle\langle\psi_{[n/2]+1}|=|[n/2]+1\rangle \langle [n/2]+1|$ with probability $p_{[n/2]+1}=\rho_{[n/2]+1,[n/2]+1}$.
\end{theorem}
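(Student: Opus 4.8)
The plan is to expose the $X$ state, after a relabeling of the reference basis, as a direct sum of qubit blocks together with (when $n$ is odd) a single incoherent level, and then to invoke the additivity of coherence concurrence under direct sums (Theorem \ref{th direct sum}) together with the two-dimensional formula $C_{l_1}^{c}=C_{l_1}=2|\rho_{12}|$. First I would record that $C_{l_1}^{c}$ is invariant under any permutation $P$ of the reference basis: since $C_{l_1}(|\psi\rangle\langle\psi|)=(\sum_i|\langle i|\psi\rangle|)^2-\sum_i|\langle i|\psi\rangle|^2$ depends only on the multiset of amplitude magnitudes, a permutation leaves it unchanged, and it carries any pure-state decomposition of $\rho$ to one of $P\rho P^\dagger$ with the same average $l_1$-norm coherence, so $C_{l_1}^{c}(P\rho P^\dagger)=C_{l_1}^{c}(\rho)$. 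I would then pick the permutation that brings each anti-diagonal pair $\{i,n+1-i\}$ into adjacent positions (fixing the central index $[n/2]+1$ when $n$ is odd), so that
\[
P\rho P^\dagger=\bigoplus_{i=1}^{[n/2]}\begin{pmatrix}\rho_{ii} & \rho_{i,n+1-i}\\ \rho_{i,n+1-i}^* & \rho_{n+1-i,n+1-i}\end{pmatrix},
\]
when $n$ is even, with an extra $1\times 1$ block $\rho_{[n/2]+1,[n/2]+1}$ appended when $n$ is odd; each $2\times 2$ summand is a principal submatrix of $\rho$ and hence positive semidefinite.

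Next I would apply Theorem \ref{th direct sum}. Writing $\sigma_i\rho_i$ for the $i$-th unnormalized qubit block, additivity gives $C_{l_1}^{c}(P\rho P^\dagger)=\sum_{i=1}^{[n/2]}C_{l_1}^{c}(\sigma_i\rho_i)$, the central level contributing nothing since a diagonal state has vanishing coherence concurrence. For each qubit block the two-dimensional result, together with the homogeneity of the $l_1$-norm on unnormalized matrices, yields $C_{l_1}^{c}(\sigma_i\rho_i)=C_{l_1}(\sigma_i\rho_i)=2|\rho_{i,n+1-i}|$. Summing and using permutation invariance then gives $C_{l_1}^{c}(\rho)=2\sum_{i=1}^{[n/2]}|\rho_{i,n+1-i}|$.

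For the optimal decomposition I would use the second assertion of Theorem \ref{th direct sum}: an optimal decomposition of the whole state is the union of optimal decompositions of the summands. For each pair $\{i,n+1-i\}$ (with $\rho_{ii}\le\rho_{n+1-i,n+1-i}$ by the standing assumption) I would import the explicit qubit decomposition given just before the theorem, namely the rank-one state proportional to $\left(\begin{smallmatrix}\rho_{ii} & \rho_{i,n+1-i}\\ \rho_{i,n+1-i}^* & |\rho_{i,n+1-i}|^2/\rho_{ii}\end{smallmatrix}\right)$ with weight $\rho_{ii}+|\rho_{i,n+1-i}|^2/\rho_{ii}$, together with the residual incoherent state on the larger-diagonal level; for odd $n$ the central level simply contributes $|[n/2]+1\rangle\langle [n/2]+1|$ with weight $\rho_{[n/2]+1,[n/2]+1}$. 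Pulling these ensembles back through $P^{-1}$ reproduces the decomposition in the statement. I would check that the residual weights are nonnegative, which is precisely the inequality $\rho_{ii}\rho_{n+1-i,n+1-i}\ge|\rho_{i,n+1-i}|^2$ coming from positivity of the $2\times 2$ principal minor, and that the ensemble sums back to $\rho$.

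The reduction to a direct sum is the crux, so the only genuine obstacle is the first step: recognizing the hidden block-diagonal structure and justifying that $C_{l_1}^{c}$ is unaffected by the basis permutation used to expose it. Once Theorem \ref{th direct sum} and the qubit formula are in hand, the remainder is bookkeeping, namely matching indices across the even and odd cases and verifying the nonnegativity of the residual weights.
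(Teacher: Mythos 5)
Your proposal is correct and follows essentially the same route as the paper: permute the basis to expose the X state as a direct sum of $2\times 2$ blocks (plus a $1\times 1$ block for odd $n$), invoke Theorem \ref{th direct sum} for additivity and for the union-of-optimal-decompositions claim, and apply the qubit formula $C_{l_1}^{c}=C_{l_1}=2|\rho_{12}|$ to each block. The only difference is that you spell out details the paper leaves implicit (permutation invariance of $C_{l_1}^{c}$, homogeneity on unnormalized blocks, and nonnegativity of the residual weights from positivity of the principal minors), which strengthens rather than changes the argument.
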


\begin{proof}
First, if $n$ is even, the X state can be decomposed as the direct sum of $n/2$ quantum states $\rho=\rho_1\oplus\rho_2\oplus\cdots\rho_{n/2}$ up to some permutations, with $\rho_1=
\left(\begin{array}{cccccccc}
\rho_{11} &\rho_{1,n}\\
\rho_{1,n}^*&\rho_{nn}
\end{array}\right)$,
$\rho_2=
\left(\begin{array}{cccccccc}
\rho_{22} &\rho_{2,n-1}\\
\rho_{n-1,2}^*&\rho_{n-1,n-1}
\end{array}\right)$,
$\cdots$,
$\rho_{n/2}=
\left(\begin{array}{cccccccc}
\rho_{n/2,n/2+1} &\rho_{n/2,n/2+1}\\
\rho_{n/2,n/2+1}^*&\rho_{n/2+1,n/2+1}
\end{array}\right)$.
If $n$ is odd, the X state can be decomposed as the direct sum of $[n/2]$ quantum states plus an additional one dimensional matrix up to some permutations, with $[n/2]$ denoting the integer part of the
number $n/2$. The permutations of the matrices neither change the $l_1$ norm coherence nor the coherence concurrence. In any case, the coherence concurrence of an X state $\rho$ is the sum of the coherence concurrence of $\rho_1$, $\rho_2$, $\cdots$, $\rho_{[n/2]}$ by Theorem \ref{th direct sum}. Since $\rho_i$ is a two dimensional state and its coherence concurrence is $C_{l_1}^{c}(\rho_i)=2|\rho_{i,n+1-i}|$ for $i=1,\cdots, [n/2]$, then it is obvious that $C_{l_1}^{c}(\rho)=2\sum_{i=1}^{[n/2]}|\rho_{i,n+1-i}|$. The optimal pure state decomposition follows from the two dimensional case.
\end{proof}

\section{Relation Between Coherence Concurrence and Entanglement}

The coherence of a quantum state $\rho=\sum \rho_{ij}|i\rangle \langle j|$ is closely related to the entanglement of the Schmidt correlated state $\rho_{mc}=\sum \rho_{ij}|ii\rangle \langle jj|$ \cite{E. Rains,A. Winter}.
For example, the coincidence of coherent cost and coherence of formation is identified with the coincidence of entanglement cost and entanglement of formation \cite{A. Winter}.
The relative entropy of coherence of assistance of $\rho$ is equal to the entanglement of assistance of $\rho_{mc}$ \cite{E. Chitambar}.
Here we focus on the entanglement measure called negativity for bipartite quantum states \cite{G. Vidal} and build a relation between coherence concurrence and entanglement.

\begin{definition}
The negativity of quantum state $\rho$ is
$N(\rho)= \frac {\|
\rho^{T_1}\|-1}{2}$,
which corresponds to the absolute value of the
sum of negative eigenvalues of $\rho^{PT}$, the superscript $PT$ means the partial transposition.
\end{definition}

Based on negativity, the convex roof extended negativity is proposed by Ref. \cite{S. Lee}, which is also an entanglement measure.
\begin{definition}
The convex roof extended negativity $N_c(\rho)$ is defined as $N_c(\rho)=\min \sum_i p_i N(|\psi_i\rangle\langle\psi_i|)$,
where the minimization is taken over all pure state decompositions of $\rho=\sum_i p_i |\psi_i\rangle\langle\psi_i|$.
\end{definition}
The $l_1$-norm coherence itself corresponds to the negativity by ${C}_{l_1}(\rho)=2N(\rho_{mc})$ \cite{S. Rana,H. Zhu2017}.
Next we show the relation between the coherence concurrence and the convex roof extended negativity.

\begin{theorem}
The coherence concurrence of $\rho=\sum \rho_{ij}|i\rangle \langle j|$ is twice the convex roof extended negativity of $\rho_{mc}$,
\begin{equation}
C_{l_1}^{c}(\rho)=2N_c(\rho_{mc}).
\end{equation}
\end{theorem}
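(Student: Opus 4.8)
The plan is to realize the Schmidt-correlation map $\rho\mapsto\rho_{mc}$ as conjugation by an isometry and then transport pure state decompositions back and forth through it. First I would introduce the linear map $V:\mathcal{H}\to\mathcal{H}\otimes\mathcal{H}$ defined on the reference basis by $V|i\rangle=|ii\rangle$. Since $\langle i|V^\dagger V|j\rangle=\langle ii|jj\rangle=\delta_{ij}$, $V$ is an isometry, $V^\dagger V=I$, and a direct computation gives $V\rho V^\dagger=\sum_{ij}\rho_{ij}|ii\rangle\langle jj|=\rho_{mc}$. In particular, for a pure state $|\psi\rangle$ one has $(|\psi\rangle\langle\psi|)_{mc}=V|\psi\rangle\langle\psi|V^\dagger$, so the already-established identity $C_{l_1}(\sigma)=2N(\sigma_{mc})$ specializes to the clean pure-state statement
\begin{equation*}
C_{l_1}(|\psi\rangle\langle\psi|)=2\,N\bigl(V|\psi\rangle\langle\psi|V^\dagger\bigr).
\end{equation*}

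Next I would set up a weight-preserving bijection between the pure state decompositions of $\rho$ and those of $\rho_{mc}$. The forward direction is immediate: if $\rho=\sum_k p_k|\psi_k\rangle\langle\psi_k|$, then $\rho_{mc}=V\rho V^\dagger=\sum_k p_k\,V|\psi_k\rangle\langle\psi_k|V^\dagger$ is a pure state decomposition of $\rho_{mc}$ with the same weights. The backward direction is where I would spend the most care, and I expect it to be the main obstacle. The key point is that the range of $\rho_{mc}=V\rho V^\dagger$ is contained in the range of $V$, namely $\mathrm{span}\{|ii\rangle\}$; hence in any decomposition $\rho_{mc}=\sum_k q_k|\phi_k\rangle\langle\phi_k|$ with $q_k>0$ every $|\phi_k\rangle$ lies in the support of $\rho_{mc}$ and therefore in the range of $V$. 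Writing $|\phi_k\rangle=V|\chi_k\rangle$ with $|\chi_k\rangle=V^\dagger|\phi_k\rangle$ and using $V^\dagger V=I$, one checks that $\{q_k,|\chi_k\rangle\langle\chi_k|\}$ is a genuine pure state decomposition of $\rho=V^\dagger\rho_{mc}V$ whose image under $V$ is the original decomposition of $\rho_{mc}$. Thus the correspondence $\{p_k,|\psi_k\rangle\langle\psi_k|\}\leftrightarrow\{p_k,V|\psi_k\rangle\langle\psi_k|V^\dagger\}$ is onto both families of decompositions, which is a more direct route than invoking the transformation of Lemma \ref{lemma different decomposition}.

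Finally I would combine the two ingredients. Since the bijection preserves the weights and, by the pure-state identity above, replaces each term $C_{l_1}(|\psi_k\rangle\langle\psi_k|)$ by $2N\bigl(V|\psi_k\rangle\langle\psi_k|V^\dagger\bigr)$, the two convex roofs agree term by term:
\begin{equation*}
C_{l_1}^{c}(\rho)=\min\sum_k p_k\,C_{l_1}(|\psi_k\rangle\langle\psi_k|)=2\min\sum_k p_k\,N\bigl(V|\psi_k\rangle\langle\psi_k|V^\dagger\bigr)=2\,N_c(\rho_{mc}),
\end{equation*}
where the minimum on the left runs over decompositions of $\rho$ and the one on the right, via the bijection, over decompositions of $\rho_{mc}$; these are the same optimization problem up to the factor two. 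This yields $C_{l_1}^{c}(\rho)=2N_c(\rho_{mc})$. The only genuinely delicate step is the support argument guaranteeing surjectivity of the decomposition correspondence; everything else follows from $V$ being an isometry together with the known pure-state relation between $l_1$-norm coherence and negativity.
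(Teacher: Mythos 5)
Your proof is correct and follows essentially the same route as the paper: both rest on the weight-preserving correspondence $\sum_i a_i|i\rangle \leftrightarrow \sum_i a_i|ii\rangle$ between pure state decompositions of $\rho$ and of $\rho_{mc}$, combined with the pure-state identity $C_{l_1}(|\psi\rangle\langle\psi|)=2N\bigl((|\psi\rangle\langle\psi|)_{mc}\bigr)$. The only difference is that where the paper cites Ref.~\cite{M. J. Zhao2008} for the key fact that every pure state decomposition of $\rho_{mc}$ consists of Schmidt-form states, you prove that fact yourself via the support argument for the isometry $V$ --- a self-contained rendering of the same step rather than a different approach.
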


\begin{proof}
Note that for the maximally correlated state $\rho_{mc}$, the pure state decompositions are all in Schmidt form $|\psi\rangle=\sum_i a_i|ii\rangle$ \cite{M. J. Zhao2008}. For pure state $|\psi\rangle=\sum_i a_i|ii\rangle$, the $l_1$-norm coherence is $C_{l_1}(|\psi\rangle\langle\psi|)=\sum_{i\neq j} |a_i^*a_j|$ and the negativity is $N(|\psi\rangle\langle\psi|)=\frac{1}{2}\sum_{i\neq j} |a_i^*a_j|$ by definitions. Hence the $l_1$-norm coherence is twice the negativity, $2N(|\psi\rangle\langle\psi|)=C_{l_1}(|\psi\rangle\langle\psi|)$.
Therefore, if $\{ p_k,\  |\psi^{\prime}_k\rangle\langle \psi^{\prime}_k|\}$ is the optimal pure state decomposition for $\rho_{mc}$ such that $N_c(\rho_{mc})=\sum_k p_k N(|\psi^{\prime}_k\rangle\langle \psi^{\prime}_k|)$ with $|\psi^{\prime}_k\rangle=\sum_i a_i^{(k)}|ii\rangle$, then $\{ p_k,\  |\psi_k\rangle\langle\psi_k|\}$ with $|\psi_k\rangle=\sum_i a_i^{(k)}|i\rangle$ is the optimal pure state decomposition for $\rho$ such that $C_{l_1}^{c}(\rho)=\sum_k p_k C_{l_1}(|\psi_k\rangle\langle\psi_k|)$.
\end{proof}


\section{Conclusions}

We have studied the properties of coherence concurrence. Analogous to the coherence of assistance we have given a physical explanation for coherence concurrence. The optimal pure state decomposition attaining the coherence concurrence has been presented for qubit states. The additivity of coherence concurrence under the direct sum operation has been proved alternatively. Since the X state is the direct sum of qubit states, the coherence concurrence for the X states has been proved to be equal to the $l_1$-norm coherence and the optimal pure state decompositions are provided. Moreover, it
has been shown that the coherence concurrence is just twice the convex roof extended negativity
of the Schmidt correlated states. Originating from the superposition principle in quantum mechanics,
coherence is a fundamental phenomena of quantum world. Our results may highlight further investigations
on quantum coherence, for example, the relations among coherence concurrence, the $l_1$ norm coherence of assistance and the $l_1$ norm coherence.

\bigskip
\noindent{\bf Acknowledgments}\,
Ming-Jing Zhao thanks the Department of Mathematics and Statistics, University of Guelph, Canada for hospitality. This work is supported by the NSF of China (Grant Nos. 11401032, 11501247, 11675113), the China Scholarship Council (Grant No. 201808110022), Qin Xin Talents Cultivation Program, Beijing Information Science and Technology University, Key Project of Beijing Municipal Commission of Education (KZ201810028042), Beijing Natural Science Foundation (Z190005), Open Foundation of State Key Laboratory of Networking and Switching Technology (Beijing University of Posts and
Telecommunications) (SKLNST-20**-2-0*) and the National Science and Engineering Research Council of Canada (Discovery Grant No. 400550).

\end{document}